\DeclareMathOperator*{\argmin}{arg\,min}
\title{\LARGE \bf
Secure-by-Construction Optimal  Path Planning for \\
Linear Temporal Logic Tasks}
\author{Shuo Yang, Xiang Yin, Shaoyuan Li and Majid Zamani
	\thanks{This work was  supported by the National Natural Science Foundation of China (61803259, 61833012) and by Shanghai Jiao Tong University Scientific and Technological Innovation Funds.
		}
\thanks{Shuo Yang, Xiang Yin and Shaoyuan Li are with Department of Automation
and Key Laboratory of System Control and Information Processing,
Shanghai Jiao Tong University, Shanghai 200240, China.
        {\tt\small  \{xiang-yang,yinxiang,syli\}@sjtu.edu.cn.}}%
\thanks{M. Zamani is with the Computer Science Department, University of Colorado Boulder, CO 80309, USA.  
	M. Zamani is also with the Computer Science Department, Ludwig Maximilian University of Munich, 80539 Munich, Germany.	
	Email: {\tt\small majid.zamani@colorado.edu}.     
}}
\newtheorem{mydef}{Definition}
\newtheorem{mythm}{Theorem}
\newtheorem{myprob}{Problem}
\newtheorem{mypro}{Proposition}
\newtheorem{remark}{Remark}
\begin{document}

\maketitle

\begin{abstract}
In this paper, we investigate the problem of planning an optimal infinite path for a single robot to achieve a linear temporal logic (LTL) task with security guarantee. 
We assume that the external behavior of the robot, specified by an output function, can be accessed by a passive intruder (eavesdropper).
The security constraint requires that the intruder should never infer that the robot was started from a secret location. 
We provide a sound and complete algorithmic procedure to solve this problem.  
Our approach is based on the construction of the twin weighted transition systems  (twin-WTS) that tracks a pair of paths having the same observation. 
We show that the security-aware path planning problem can be effectively solved based on graph search techniques in the product of the twin-WTS and the B\"{u}chi automaton representing the LTL formula.  
The complexity of the proposed planning algorithm is polynomial in the size of the system model.
Finally, we illustrate our  algorithm by a simple robot planning example. 
\end{abstract}

\section{Introduction}\label{sec:1}
\subsection{Motivation}
Path planning is a fundamental problem in robotics  which asks to generate a planned trajectory from an initial location such that some desired requirements are fulfilled. 
Classical planning problems usually focus on low-level tasks such as obstacle avoidance or point-to-point navigation \cite{lavalle2006planning,karaman2011sampling}. 
In the past decades, temporal-logic-based high-level path planning for complex tasks has drawn considerable attention in the literatures; see, e.g., \cite{fainekos2009temporal,wongpiromsarn2012receding,kress2018synthesis,kloetzer2019path}. 
In this framework, the planning task is specified by linear temporal logic (LTL) or computation tree logic (CTL) formulae.  
In particular,  LTL can be  used to represent many important properties such as safety,  liveness  and priority \cite{baier2008principles}. 
By using automata-theoretic approach, algorithmic procedures are developed to automatically generate correct-by-construction plans  to achieve the given temporal logic tasks.

While the temporal-logic-based planning has  been extensively investigated for safety requirements,  security and privacy requirments are left as an afterthought in many applications.  For instance, in  robot data collecting problem, a robot needs to visit different locations in order to  gather data and then to transmit collected data  to the cloud.  However, the data transmission    may not be secure in the sense that there may exist an eavesdropper ``listening" to the communication. Such information leakage may reveal some crucial secret behavior of the robot, e.g., some information, the robot does not intend to transmit,  may be \emph{inferred} by the intruder. Therefore, one also needs to incorporate such a security constraint in the path planning algorithm. 
Due to its importance, security and privacy concerns have been attracting  attentions in the robot path planning literature; see, e.g., \cite{li2019coordinated,zheng2020adversarial}.

\subsection{Our Contributions} 
Motivated by the security concerns in robotic systems, in this paper, we formulate and solve a security-aware optimal path planning problem with respect to LTL requirments. Specifically, we consider a single robot whose mobility is modeled as a weighted transition system (WTS). We consider an intruder modeled as an outside observer (or eavesdropper) who accesses the external behaviors of the system specified by an output function. 
We consider the planning problem of achieving a task specified by a general LTL formula, while hiding the secret initial location of the robot. 
To capture this security requirement, we adopt the notion of an information-flow security property called \emph{initial-state opacity}  \cite{saboori2013verification}. 
Specifically, a planed path from a secret initial-state is said to be \emph{secure} if there exists another path from a non-secret initial-state such that those two paths are observationally equivalent from the intruder's point of view. 

Our approach is different from the standard initial-state opacity verification procedure \cite{saboori2013verification}, which requires to build the initial-state estimator whose size is exponential in the number of system states. Instead, we propose a computationally more efficient approach by constructing the twin-WTS structure which synchronizes the system with its copy based on the observation. Similar structures have been used in the literature for the purpose of property verification, e.g., diagnosability, observability and prognosability.
Here, we show that the security-aware path planning problem can be effectively solved by a graph search in the product of the twin-WTS and the B\"{u}chi automaton that accepts the given LTL task. 
Also, we show that the constructed product system also preserves optimality.  
Furthermore, our algorithm fails to provide a solution only when no solution exists.
Hence, we provide a sound and complete solution to the security-aware optimal LTL planning problem. 

\subsection{Related Works}
Optimal LTL path planning problem was originally formulated in \cite{smith2011optimal}, where  the optimization objective is to minimize the worst cost between each satisfying instances. 
This framework has been extended to the case of multi-robot \cite{guo2015multi,ulusoy2013optimality}, where each robot may have a local task or a team of robots need to collaborate to achieve a global task. 
Recently, sampling-based techniques have been applied to improve the scalability of optimal path planning algorithm \cite{li2017sampling,kantaros2018sampling}. 
Optimal temporal logic path planning problems  have also been studied for  stochastic systems modeled as MDPs; see, e.g., \cite{wolff2012robust,ding2014optimal,deng2017approximate,guo2018probabilistic}. 
However, none of the above mentioned works considers security constraints.

In the context of security-aware path planning, our work is mostly related to \cite{hadjicostis2018trajectory}. 
The differences between our work and \cite{hadjicostis2018trajectory} are as follows. 
First, the planning task considered in our work is expressed as a general LTL formula, while  \cite{hadjicostis2018trajectory} considers a simple reachability task. 
Second, no optimality is consider in \cite{hadjicostis2018trajectory}. 
Finally, the security requirement considered in our work is different with that in \cite{hadjicostis2018trajectory}. 
In particular, \cite{hadjicostis2018trajectory} considers protecting the \emph{current} secret location of the robot, 
while we consider protecting the \emph{initial} secret location of the robot. 
We show that initial-type secret has a nice property and it suffices to track a pair of observational equivalent states in the system. 
Therefore, the complexity of our planning algorithm is \emph{independent} from the number of secret states and is always quadratic in the number of system states. 
However, the  complexity of the planning algorithm  in \cite{hadjicostis2018trajectory} is based on the  structure of $K$-detector, whose size grows exponentially in the number of secret states.

In the computer science literature, the concept of \emph{hyper-properties} \cite{clarkson2010hyperproperties} has attracted many attentions in the past years, e.g., HyperLTL \cite{clarkson2014temporal}. In particular, hyper-properties are closely related to security requirements as it allows to specify the relationships among multiple paths. 
Very recently,  the authors of  \cite{wang2020hyperproperties} show that initial-state opacity planning problem can be specified as an instant of the HyperLTL planning problem; symbolic algorithms for finite synthesis is also provided therein. 
This result is closely related to ours. 
However, initial-state opacity considered in  \cite{wang2020hyperproperties} is based on the equivalence of atomic propositions. In our setting, atomic propositions are only used to specify the desired temporal logic task, while the observation equivalence is specified by a new output function. This setting is more general as the atomic propositions and the output sets can be different.  
Furthermore, our planning algorithm is tailored to initial-state security, 
which avoids the general large complexity in  HyperLTL synthesis.  

Finally, our work is also related to opacity-enforcing supervisory control in the context of discrete-event systems \cite{dubreil2010supervisory,saboori2011opacity,yin2016uniform,tong2018current}. 
However, the opacity-enforcing control problem is essentially a reactive synthesis problem under security constraint whose complexity is exponential in the size of the system.  Here, we consider a security-aware path planning problem that can be solved more efficiently. Furthermore, no LTL specification and optimality were considered in the opacity control problem in \cite{dubreil2010supervisory,saboori2011opacity,yin2016uniform,tong2018current}.

\subsection{Organization}
The rest of the paper is organized as follows. 
In Sections~\ref{sec:2} and~\ref{sec:3},  a motivating example and
some necessary preliminaries are presented, respectively. 
The security-aware LTL planning problem is formally formulated in  Section~\ref{sec:4}. In Section~\ref{sec:5}, we discuss how to solve this problem based on the twin-WTS. A case study is presented in Section~\ref{sec:6} to illustrate the proposed algorithm. Finally, we conclude the paper by Section~\ref{sec:7}.

\section{Motivating Example}\label{sec:2}

\begin{figure}
	\center
	\includegraphics[width=0.45\textwidth]{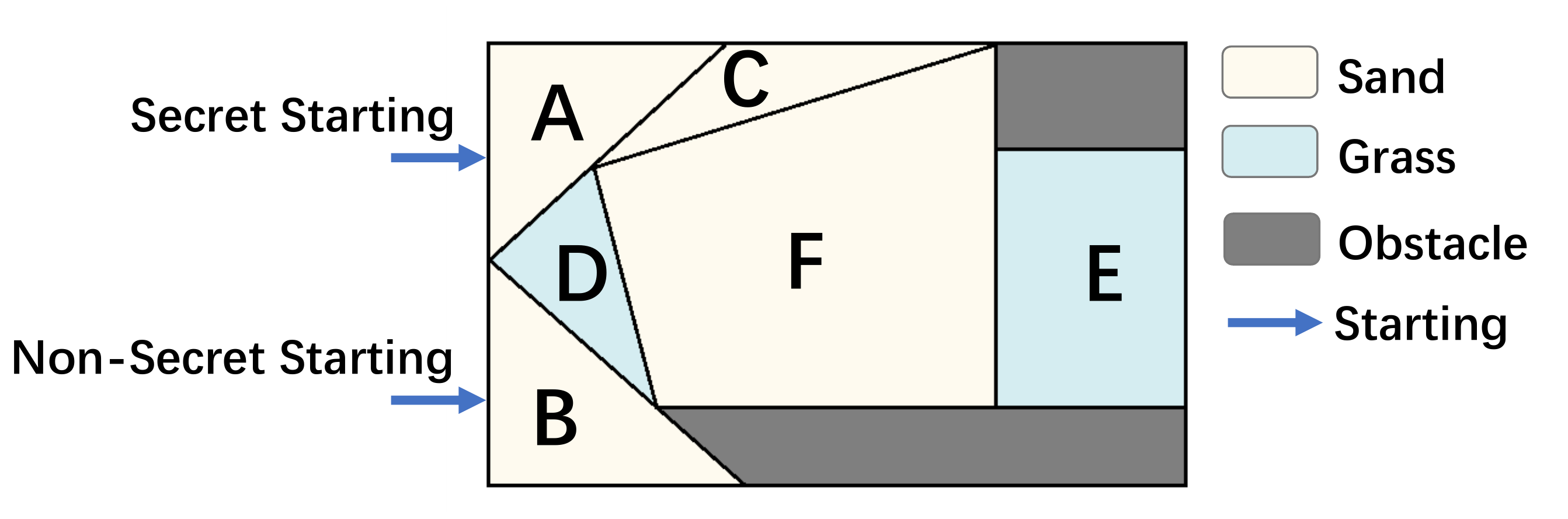}
	\caption{Work space of the single robot.}\label{fig:situation}
\end{figure}

\begin{figure}
	\center
	\includegraphics[width=0.38\textwidth]{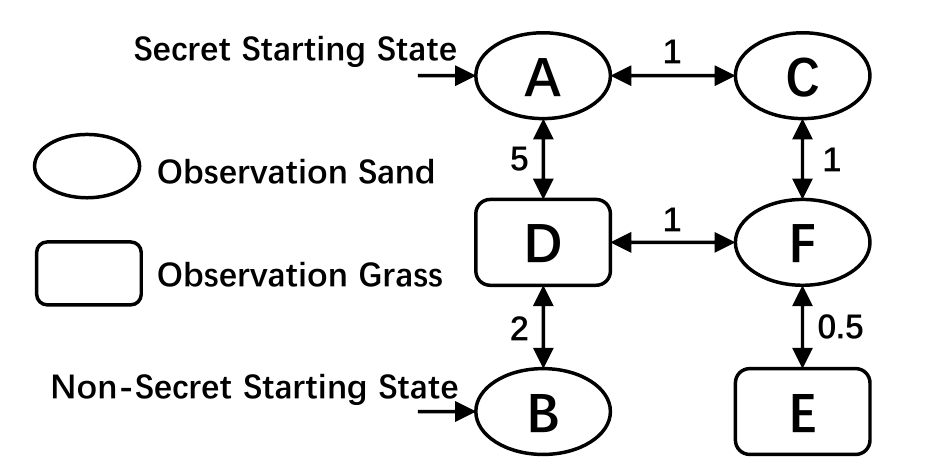}
	\caption{The specification automaton of the motivating example. The intruder has two observations on the robot: the robot is at sand or grass land. The robot can start from A (secret) or B (non-secret). Bidirectional transition means that robot can move in both directions; numbers beside the transition represent the cost of this  transition.}\label{fig:case}
\end{figure}
Before we formally formulate the main problem in this paper, we first consider a motivating example. 
Suppose that a single mobile robot moves in a workspace with grass and sand lands as shown in Figure~\ref{fig:situation}. 
The workspace is partitioned to six regions of interest and  black regions denote obstacles. At each instant, the robot can only move to regions that are adjacent to its current region (sharing at least an edge). 
We assume that the robot always knows exactly its current location. 
On the other hand, we assume that there is an \emph{outside observer} that knows whether the robot is currently at a grass or a sand land. 
The mobility model of the robot can be represented by the transition system shown in  Figure~\ref{fig:case}. 
Furthermore, we assume that there is a cost moving from one region to another, which is specified by the number associated to each bidirectional transition in Figure~\ref{fig:case}.

The task of the robot is to deliver goods between regions $F$ (representing, e.g.,  a factory) and $E$ (representing, e.g.,  a warehouse), i.e, visit $F$ and $E$ infinitely often. 
The robot may initially start from regions $A$ or $B$. 
However, it does not want the outside observer to know that it started from region $A$ (if so). This may because, for example, starting from different locations implies that different robot-types are used, which may further reveal which kind of goods the factory is delivering.

Now, suppose that the robot is starting from region $A$.  
Clearly, the optimal plan to achieve the temporal logic task  is
\[
A\to C \to (  F \to E )^\omega,
\]
where notation $\omega$  over parentheses means the infinite repetition of the finite execution inside them.
However, this plan is not secure in the sense that the observer will know for sure that the robot started from region $A$ after observing two consecutive $Grass$. 
This is because there is no feasible path from region $B$ that can generate the same observation. 
On the other hand, the robot may take the plan below 
\[
A\to D \to (  F \to E )^\omega.
\]
This plan is more expensive as the robot will incur higher cost when moving from $A$ to $D$. 
However, this plan is secure in the sense that there exists another path 
\[
B\to D \to (  F \to E )^\omega, 
\]
starting from region $B$ that generates the same observation. 
Therefore, although more cost is paid, the robot is able to hide the secret about its initial location.

\section{Temporal Logic Task Planning}\label{sec:3}
In this section, we define  basic notations that we use in the rest of the paper and  introduce some necessary preliminaries. 
For a set $A$, we denote by $|A|$ and $2^{A}$  its cardinality and its power set, respectively. 
A finite sequence over $A$ is a sequence in the form of $a_1\cdots a_n$, where $a_i\in A$; 
we denote by $A^*$  the set of all finite sequences over $A$. 
Similarly, we denote by $A^\omega$  the set of all infinite sequences over $A$.

\subsection{Weighted Transition Systems}

We consider a scenario where single mobile robot works in a  workspace $\mathcal{W}\subseteq \mathbb{R}^2$. 
The workspace is partitioned as $n$ disjoint regions of interest denoted by $r_1,\dots,r_n$ and we denote by $\mathcal{I}:=\{1,\cdots,n\}$ the index set. 
In general, workspace regions can be of any arbitrary shape partitioned based on the task properties and the dynamic of the robot; see, e.g., \cite{kloetzer2007temporal,belta2017formal} for details on region partition. 
In this work, we focus on the task planning problem; hence, 
we model the mobility of the robot in the workspace as  a \emph{Weighted Transition System} (WTS) defined as follows.
\begin{mydef}\label{def1}(Weighted Transition System) 
A weighted transition system is a 6-tuple 
\[
T=(Q,Q_0, \to, w,\mathcal{AP},L),  
\]
where
\begin{itemize}
  \item 
   $Q=\{q_i : i\in\mathcal{I}\}$ is the set of states and each state $q_{i}$ indicates that the robot is at location $r_i$;
  \item 
  $Q_0 \subseteq Q$ is the set of initial states representing all possible starting locations of the robot;
  \item 
  $\to \subseteq Q   \times Q $ is the transition relation such that $(q_i,q_j)\in \to$ means that there exists a controller that can drive robot from region $r_i$ to $r_j$  without going through any other regions;
  \item 
  $w: Q\times Q\to \mathbb{R}_{+} $ is a cost function that assigns to each transition $(q_i,q_j)\in \to$ a positive weight $w(q_i,q_j)$ representing the cost incurred for driving the robot from region $r_i$ to $r_j$, e.g., the distance between $r_i$ and $r_j$; 
  \item 
  $\mathcal{AP}$ is the set of atomic propositions used for representing some basic properties of interest;
  \item 
  $L: Q  \to 2^\mathcal{AP}$ is the labeling function that assigns each state to a set of atomic propositions.
\end{itemize}  
\end{mydef}

Given a WTS $T$, an \emph{infinite internal path} is an infinite sequence of states  
                $\tau=\tau(1)\tau(2)\tau(3)\cdots \in Q^\omega$ 
such that
$\tau(1)\in Q_0$  and $(\tau(i), \tau(i+1))\in \to, \forall i\in \mathbb{N}_+$. 
A \emph{finite internal path} of a WTS is defined analogously.  
Hereafter, an internal path will just be referred to as path for the sake of simplicity. 
We denote by $\texttt{Path}^{\omega}(T)$ and $\texttt{Path}^{*}(T)$ the set of all infinite and finite paths in $T$, respectively.
The cost function $w$ is considered to be additive; therefore, 
the cost of a finite path  $\tau\in \texttt{Path}^{*}(T)$, denoted by $J(\tau)$, is defined by
\begin{equation}\label{eq1}
	J(\tau) = \sum_{i=1}^{|\tau|-1}w(\tau(i),\tau(i+1)),
\end{equation}
where $|\tau|$ is the length of the path. In words, the cost $J(\tau)$ captures the total cost incurred during the execution of  finite path $\tau$.
The \emph{trace} of an infinite path $\tau \in Q^\omega $  denoted by $\texttt{trace}(\tau)$ is an infinite sequence over $2^\mathcal{AP}$ such that 
$\texttt{trace}(\tau)=L(\tau(1))L(\tau(2))L(\tau(3))\cdots$.  
Given a set of states $Q'\subseteq Q$, we denote by $\texttt{Reach}(Q')$ the set of states reachable from $Q'$. 
We say  a state $q\in Q$ is in a cycle of $T$ if there exists a sequence $q_1q_2\dots q_k\in Q^*$ such that 
$q_1=q_k=q$ and 
$(q_i,q_{i+1})\in \to ,\forall i\in \mathbb{N}_+$. 
We denote by $\texttt{cycle}(T)$ the set of all states that are in some cycles of $T$.

\subsection{Linear Temporal Logic and B\"{u}chi Automata}

Let $\mathcal{AP}$ be the set of atomic propositions.
A Linear Temporal Logic (LTL) formula is constructed based on atomic propositions, Boolean, and temporal operators. 
Specifically, an LTL formula $\phi$ is recursively defined by
\[
\phi 
::= 
true  
\mid p 
\mid \phi_1\wedge \phi_2
\mid \neg \varphi
\mid \bigcirc \phi
\mid \phi_1 U \phi_2,
\]  
where $p\in \mathcal{AP}$ is an atomic proposition;  $\bigcirc$ and $U$ denote, respectively, ``next" and ``until". 
The above syntax also induces   temporal operators 
$\Diamond$ (``eventually") and $\Box$ (``always"), 
where $\Diamond \phi:=true U \phi$ and $\Box \phi:= \neg \Diamond \neg \phi$.  

LTL formulas are used to evaluate whether or not \emph{infinite words} satisfy some properties. 
Formally, an infinite word $\sigma\in (2^\mathcal{AP})^\omega$ is an infinite sequence over alphabet $2^\mathcal{AP}$.
We denote by $\sigma \models \phi$ if $\sigma$ satisfies the LTL formula $\phi$.  
For example, $\Box \Diamond \phi$ means that property $\phi$ should be satisfied infinitely often.
The reader is referred to \cite{baier2008principles} for more details about the syntax and the semantics of LTL, which are omitted here for the sake of brevity.  
We define $\texttt{Words}(\phi) = \{ \sigma \in (2^\mathcal{AP})^\omega : \sigma \models \phi\}$ as the set of all words satisfying LTL formula $\phi$.

\begin{mydef}(Nondeterministic B\"{u}chi Automaton)
A Nondeterministic B\"{u}chi Automaton (NBA) is a 5-tuple 
$B=(Q_B,Q_{0,B},\Sigma, \to_B, F_B)$, 
where  $Q_B$ is the set of states, $Q_{0,B} \subseteq Q_B$ is the set of initial states, 
$\Sigma$ is an alphabet, $\to_B\subseteq Q_B \times \Sigma  \times Q_B$ is the transition relation
and  $F_B \subseteq Q_B$ is the set of accepting  states.
\end{mydef}

Given an infinite word $\sigma=\pi_0\pi_1\pi_2\cdots \in \Sigma^\omega$, 
an infinite \emph{run} of $B$ over $\sigma$ is an infinite sequence 
$\rho=q_0q_1q_2\cdots\in Q_B^\omega$ such that
$q_0\in Q_{0,B}$ and  $(q_i,\pi_i,q_{i+i})\in \to_B$ for any $i\in \mathbb{N}$. 
An infinite run $\rho\in Q_B^\omega$ is said to be \emph{accepted} by $B$ if $\texttt{Inf}(\rho)\cap F_B\neq \emptyset$, 
where $\texttt{Inf}(\rho)$ denotes the set of states that appears infinite number of times in $\rho$. 
Then, an infinite word $\sigma$ is said to be accepted by $B$ if it induces an infinite run accepted by $B$.  
We denote by $\mathcal{L}_B\subseteq \Sigma^\omega$ the set of all accepted words by NBA $B$.

For any LTL formula $\phi$, it is well-known  \cite{vardi1986automata}  that
there always exists an NBA over $\Sigma=2^{\mathcal{AP}}$ that accepts exactly all infinite words satisfying $\phi$, i.e., $\mathcal{L}_B= \texttt{Words}(\phi)$.
Throughout this paper, $B=(Q_B,Q_{0,B},2^{\mathcal{AP}} , \to_B, F_B)$ is used to denote the NBA corresponding to the LTL formula $\phi$ of interest.

\subsection{Temporal Logic Path Planning}

The standard LTL path planning problem asks to find an infinite path  $\tau\in \texttt{Path}^\omega(T)$ of system $T$ such that $\texttt{trace}(\tau) \models \phi$. Due to the structure of the accepting condition in B\"{u}chi automata, it suffices to find an infinite path with the following \emph{prefix-suffix structure}
\[
\tau=q_1\cdots q_{k}[q_{k+1}\cdots q_{k+m} ]^\omega \in \texttt{Path}^\omega(T)
\]
such that $\texttt{trace}(\tau) \in \mathcal{L}_B$. 
Intuitively,  
$q_{k+1}\cdots q_{k+m}$ is the suffix that forms a cycle such that the robot should execute infinitely often, while $q_1\cdots q_{k}$ is the prefix representing the  transient path that leads to the cyclic path. Such a prefix-suffix structure is also referred to a \emph{plan}.
In this work, we consider the cost of a plan, which is an infinite path, as the cost of its prefix and suffix, i.e., 
\begin{equation}\label{eq2}
\hat{J}(\tau)=J(q_1\cdots q_kq_{k+1}\cdots q_{k+m}q_{k+1}).
\end{equation} 
In order to find an optimal plan with the least cost, 
one can perform modified shortest path search in the \emph{product system} composed by  $T$ and $B$; see, e.g., \cite{ulusoy2013optimality}.  

\begin{remark}
The cost function defined in \eqref{eq2} essentially treats the transient cost $J_{pre}=J(q_1\cdots q_kq_{k+1})$ and the steady-state cost $J_{suf}=J(q_{k+1}\dots q_{k+m}q_{k+1})$ equivalently. 
In general, we can define the cost function as $\hat{J}(\tau)=\alpha  J_{pre}+(1-\alpha)J_{suf}$, 
where $\alpha\in [0,1]$ is a parameter adjusting the weight of each part. 
Our work  considers the case of $\alpha=0.5$ for the sake of simplicity; all results can be easily extended to the general case. 
\end{remark}

\begin{remark}
Given an infinite path (plan), depending on how we decompose   prefix and suffix, the plan may have different costs.
For example, $q_1(q_2q_3)^\omega$ and $q_1q_2(q_3q_2)^\omega$ are the same path but have different costs. 
Hereafter, for a plan $\tau$, $\hat{J}(\tau)$ is always considered as the cost for the  prefix-suffix structure of $\tau$ having the minimum cost. 
\end{remark}	
	
\section{Security-Aware Path Planning Problem}\label{sec:4}
As we discussed in the motivating example, the solution to the standard LTL path planning problem does not necessarily provide security guarantees. In this section, we present the considered information-flow security model and formulate the security-aware path planning problem.

Given WTS $T=(Q,Q_0, \to, w,\mathcal{AP}, L)$, we assume that the internal state of the system is not available to the intruder (malicious observer) directly. 
Instead, the intruder can only infer the behavior of the system via its outputs. 
Formally, we model the intruder's observation of the system  as an output function 
\[
H: Q\to Y,
\] 
where $Y$ is the set of outputs. 
The execution of any  infinite internal path             
$\tau=\tau(1)\tau(2)\tau(3)  \cdots \in  \texttt{Path}^{\omega}(T)$ 
will generate an infinite \emph{external path} 
$H(\tau(1))H(\tau(2))H(\tau(3))\cdots\in Y^\omega$; 
we also denote this external path by $H(\tau)$ with a slight abuse of notation. 
A finite external path is defined analogously.

In this work, we consider the problem of protecting   \emph{secret initial location} of the robot.
To this end, we assume that $Q_S\subset Q_0$ is the set of \emph{secret initial states}. 
Hereafter, a WTS $T$ equipped with output function $H$ and secret initial states $Q_S$ is also written as 
$T=(Q,Q_0, \to, w,\mathcal{AP}, L, H,Y,Q_S)$ for simplicity. 
To guarantee security, we want to make sure that the intruder is not able to infer confidentially that the robot started from a secret location. 
This requirement is formalized  as follows.
 
\begin{mydef}\label{def:security}(Security)
Let $T=(Q,Q_0, \to,w,\mathcal{AP}, L, $ $ H,Y,Q_S)$ be a WTS. 
An infinite path $\tau\in \texttt{Path}^\omega(T)$ is said to be \emph{secure}
if there exists an infinite path $\tau'\in \texttt{Path}^\omega(T)$ such that $\tau'(1)\notin Q_S$ and $H(\tau)=H(\tau')$.
\end{mydef}

\begin{remark}
	The above definition of security is related to the notion of {initial-state opacity} proposed in \cite{saboori2013verification}. Essentially, initial-state opacity is a system property such that all paths generated by the system are secure in our sense. However, as we are considering path planning problem, security is defined only for a specific path rather than the entire system. 
\end{remark}

\begin{myprob}(Security-Aware Optimal LTL Path Planning Problem)\label{prob1}
Given a WTS $T$, secret states $Q_S\subset Q$, output function $H:Q\to Y$ and LTL formula $\phi$, for each possible initial-state $q_0\in Q_0$, 
determine a plan $\tau\in \texttt{Path}^\omega(T)$ with $\tau(1)=q_0$  such that the following conditions hold:
\begin{enumerate}
	\item 
	$\texttt{trace}(\tau)\models \phi$; 
	\item 
	$\tau$ is secure; 
	\item
	For any other plan $\tilde{\tau}\in \texttt{Path}^\omega(T)$ satisfying the above requirements, we have 
	$\hat{J}(\tau)\leq \hat{J}(\tilde{\tau})$.
\end{enumerate}
\end{myprob}

\begin{remark}(Intruder Model) In the above problem formulation, it essentially assumes that the intruder knows the followings: 
\begin{enumerate}
 \item 
 the mobility model of the robot, i.e.,  WTS   $T$; and 
 \item 
 the external path generated by the robot, i.e., $H(\tau)$. 
\end{enumerate}
However, it does not know the exact internal state of the robot, which has to be inferred by observing outputs. 
On the other hand, the robot is assumed to know exactly its initial and current state; therefore, this is still a planning problem under perfect information from the robot's point of view.  
This setting is reasonable in many applications because: 
 (i) the system usually has more ability to acquire information about itself than the intruder; and 
(ii) the intruder's information sometimes comes from eavesdropping the information transmission  which is  a partial information of the robot's knowledge.
\end{remark}

\begin{remark}	
According to Definition~\ref{def:security}, if a path $\tau$ is started from a non-secret initial state $q_0\in Q_0\setminus Q_S$, then it is always secure as we can choose $\tau'=\tau$. 
Therefore, for non-secret initial states, we just need to solve the standard optimal LTL path planning problem; see, e.g., \cite{smith2011optimal}. 
However, for those secret initial states, the security constraint has to be taken into account. 
This issue will be addressed in the next section.
\end{remark}

\section{Planning Algorithm}\label{sec:5}
In this section, we present the security-aware path planning algorithm. 
Our approach is based on constructing a new transition system that effectively captures the security constraint. 
\subsection{Twin-WTS}
In order to handle the security constraint, one needs to track the information of the outside observer based on the external path. Such an information-tracking task can be achieved by constructing the \emph{initial-state estimator} \cite{saboori2013verification}. However, the size of an initial-state estimator grows exponentially as the number of states in the system increases due to the subset construction. 

Here, we present a computationally more efficient approach that does not rely on the construction of the initial-state estimator. 
Instead, we propose a new structure called the \emph{twin-WTS}, which is used to track all current states pairs of two paths that have the same external path from the intruder's point view.
This structure is formally defined as follows.

\begin{mydef}(Twin-WTS) 
Given a WTS $T=(Q,Q_0, \to,$ $w,\mathcal{AP}, L, $ $ H,Y,Q_S)$,  its twin-WTS is a new WTS
\[
V=(X,X_0,\to_{V}, w_V,\mathcal{AP},L_V), 
\]
where
\begin{itemize}
	\item 
	$X\subseteq Q\times Q$ is the set of states; 
	\item 
	$X_0=\{(q_1,q_2)\in Q_0\times Q_0:  H(q_1)=H(q_2)  \}$
	is the set of initial-states; 
	\item 
	$\to_{V} \subseteq X\times X$ is the transition relation defined by: 
	for any $x=( q_1,q_2 )\in X$ 
	    and $x'=( q_1',q_2' )\in X$, 
	  we have $(x,x')\in \to _V$
	if  the followings hold:
	\begin{itemize} 
		\item 
		$(q_1, q_1')\in \to$; 
		\item
		$(q_2, q_2')\in \to$;  
		\item 
		$H(q_1')=H(q_2')$.
	\end{itemize} 
   \item 
   $w_V: X \times X \to \mathbb{R}_{+} $ is the cost function defined by: for any  $x = (q_1,q_2) \in X$ and $x' = (q_1',q_2') \in X$, 
   we have $w_V (x,x')=w(q_1,q_1')$; 
   \item 
    $L_V: X \to 2^{\mathcal{AP}} $ is the labeling function defined by: for any  $x = (q_1,q_2) \in X$, 
     we have $L_V (x)=L(q_1)$. 
\end{itemize}
\end{mydef}
 
\begin{remark}
Intuitively, the twin-WTS  tracks  two internal paths that generate the same external path. 
Specifically, the first component is used to represent the trajectory in the real system, while the second component is used to represent a copy that mimics the real system in the sense of output equivalence. 
Therefore, for any path $(\tau_1(1),\tau_2(1))(\tau_1(2),\tau_2(2))\cdots$ in $V$, we have 
$H(\tau_1(1))H(\tau_1(2))\cdots= H(\tau_2(1))H(\tau_2(2))\cdots$. 
On the other hand, for any two paths $\tau_1, \tau_2$ in $T$ such that $H(\tau_1)=H(\tau_2)$, 
we can find a path $\tau$ in $V$ such that its first component is $\tau_1$ and the second component is $\tau_2$. 
Also, we note that the cost function $w_V$ and the labeling function $L_V$ are all defined based on the states in the first component, which is the part for the real system. 
Finally,  the size of $V$ is polynomial in the size of $T$ as it contains at most $|Q|^{2}$ states.
\end{remark}

\subsection{Planning Algorithm}

The twin-WTS can be used to capture the security constraint based on the following observation. 
For any secure path starting from a secret initial state $q_{s,0}$, there must exist an observation-equivalent path from a non-secret initial state $q_{ns,0}$. 
Furthermore, such a path-pair should exist in the twin-WTS $V$ from state $(q_{s,0}, q_{ns,0})$. 
Therefore, to perform security-aware path planing, it suffices to perform   planning from an initial-state in $V$ in which the first component is the real (secret)  initial-state and the second component is a non-secret state. 
Furthermore, in order to incorporate the temporal task, we need to synchronize the 
twin-WTS with the NBA $B$ that accepts $\phi$; this is defined as the product system.
 
\begin{mydef}(Product System)
Given twin-WTS $V=(X,X_0, \to_V, w_V,\mathcal{AP},L)$ 
and        NBA $B=(Q_B,Q_{0,B},\Sigma,$ $\to_B,F_B)$, 
the product of $V$ and $B$ is a new (unlabeled) WTS
\[
T_{\otimes}=(Q_{\otimes},Q_{0,\otimes},\to_{\otimes},w_{\otimes}), 
\]
where
\begin{itemize}
\item 
$Q_{\otimes}\subseteq    X\times Q_B$ is the set of states;
\item 
$Q_{0,\otimes}=    X_0\times Q_{0,B}$  is the set of initial states; 
\item 
$\to_{\otimes}\subseteq Q_{\otimes}\times Q_{\otimes}$ is the transition relation defined by: 
for any  $q_\otimes = ( x, q_B ) \in Q_\otimes$ and $q_\otimes' = ( x', q_B' ) \in Q_\otimes$, 
we have $(q_\otimes,q_\otimes')\in \to_\otimes$ if the followings hold: 
	\begin{itemize} 
	\item
	$(x,x')\in \to_V$; and   
	\item 
	$(q_{B}, L_V(x)  , q'_{B}) \in \to _B$.
   \end{itemize}
\item 
$w_\otimes: Q_\otimes \times Q_\otimes \to \mathbb{R}_{+} $ is the cost function defined by: for any  $q_\otimes = (x,q_{B}) \in Q_\otimes$ and $q'_\otimes = (x',q'_{B}) \in Q_\otimes$, 
we have $w_\otimes (q_\otimes,q_\otimes')=w(x,x')$.
\end{itemize}
\end{mydef}
 
Essentially, the product system further restricts the dynamic of $V$ such that each movement should satisfy the LTL task $\phi$, i.e., $(q_{B}, L_V(x)  , q'_{B}) \in \to _B$.  
Note that the original WTS $T$ is not synchronized with $B$ as the dynamic of $T$ has already been encoded in the first component of $V$. 
For each state $((q, q'),q_B)\in Q_\otimes$, we denote  by   $\Pi[  (  (q, q'),q_B ) ] =q$ the projection to the state space of $T$;  
  we also write 
$\Pi[  ( (q_0, q'_0), q_{0,B}) \cdots ( (q_n, q'_n), q_{n,B} ) ]=q_0\cdots q_n$.

For each initial-state $q_0\in Q_0$ in $T$, we denote by $\textsc{Int}_{q_0}(T_\otimes) \subseteq Q_{0,\otimes}$ the set of initial-states in $T_{\otimes}$ whose first components are $q_0$ while the second component are non-secret states in $T$, i.e., 
\[
\textsc{Int}_{q_0}(T_\otimes)=\{ (  (q_0,q_0'), q_B ) \in Q_{0,\otimes}:  q_0'\notin  Q_{S}    \}. 
\]
Also, we define $\textsc{Goal}(T_\otimes)\subseteq Q_\otimes$ as the set of  states in $T_{\otimes}$ whose last components are in $F_B$ and they are in some cycles of $T_\otimes$, i.e., 
\begin{align}
&\textsc{Goal}(T_\otimes)  =\nonumber\\
&\{ (  (q,q'), q_B ) \!\in\! Q_\otimes:   q_B\!\in\! F_{B}\wedge 
(  (q,q'), q_B )\!\in\! \texttt{cycle}(T_\otimes)
     \}. \nonumber
\end{align} 
In order to find an optimal path from initial state $q_0$ in $T$, it suffices to find an optimal path in the form of
\[
\textsc{Int}_{q_0}(T_\otimes)\to  (\textsc{Goal}(T_\otimes) \to \textsc{Goal}(T_\otimes) )^\omega
\]
in $T_\otimes$. Note that both sets $\textsc{Int}_{q_0}(T_\otimes)$ and $\textsc{Goal}(T_\otimes)$ are non-singleton in general. 
Therefore, we need to consider all possible combinations in order to determine an optimal path. 
This idea is formalized by Algorithm~1. 

Specifically,  lines 1-3   construct   the NBA $B$,  the twin-WTS $V$ and the product system $T_{\otimes}$.
Line 4 aims to determine if there is a feasible path from $q_0$ satisfying both the LTL constraint and the security constraint. 
In particular, if $\textsc{Int}_{q_0}(T_\otimes)$ cannot reach any goal state in cycle, then this means that there does not exist an infinite path accepted by $\phi$ that has an observation equivalent path from a non-secret initial state, i.e.,  there exists no feasible path starting from $q_0$. 
Otherwise, 
we consider, in lines~7 and~8, each combination of state   $q_I$ in $\textsc{Int}_{q_0}(T_\otimes)$ and  state $q_{G}$ in $\texttt{Reach}(\{q_I\}) \!\cap \! \textsc{Goal}(T_\otimes)$, which is a goal state reachable from $q_1$ and in some cycles. 
In lines 9-10, 
we determine the shortest path from $q_I$ to $q_G$ and the shortest path from $q_G$ back to itself; 
the projection onto $T$ by $\Pi$ then gives us an infinite path satisfying both the LTL and the security constraint. 
Then among all such feasible combinations, we determine the optimal pair $(q_I^*,q_G^*)$ that minimizes the path cost function defined in~\eqref{eq2}  and the optimal plan $\tau=\tau^{q_I^*,q_G^*}[\tau^{q_G^*,q_G^*}]^\omega$ is returned.

\begin{algorithm}
	
	\SetAlgoNoLine 
	\SetKwInOut{Input}{\textbf{input}}\SetKwInOut{Output}{\textbf{output}}
	
	\Input{
		LTL formula $\phi$, WTS $T$ with $H$ and $Q_S$, initial state $q_0$
	}
	\Output{Optimal plan  $\tau$ from $q_0\in Q_0$
	}
	\BlankLine
	Convert $\phi$ to   NBA $B=(Q_B,Q_{0,B},\Sigma,\to_B,F_B)$\;
	Construct   twin-WTS  $V=(X,X_0,\to_{V}, w_V,\mathcal{AP},L_V)$\;
	Construct the product of $V$ and $B$ $T_{\otimes}=(Q_{\otimes},Q_{0,\otimes},\to_{\otimes},w_{\otimes})$\;
	\eIf{$\texttt{Reach}(\textsc{Int}_{q_0}(T_\otimes)) \cap  \textsc{Goal}(T_\otimes)=\emptyset$ }
	{
	  \textbf{return} ``no feasible plan from $q_0$"\;
	}
	{
	\For{$q_{I}\in \textsc{Int}_{q_0}(T_\otimes)$} 
	{
	\For{$q_{G}\!\in\! \texttt{Reach}(\{q_I\}) \!\cap \! \textsc{Goal}(T_\otimes)$} 
	{
	 $\tau^{q_I,q_G}=\Pi[\texttt{Shortpath}$($ q_{I}, q_{G}$)]\;
	 $\tau^{q_G,q_G}=\Pi[\texttt{Shortpath}$($q_{G}, q_{G}$)]\;
	}
    }
  	$(q_I^*,q_G^*)=\argmin_{(q_I,q_G)}\hat{J}(  \tau^{q_I,q_G} [ \tau^{q_G,q_G}] ^\omega  )$\; 
    \textbf{return} optimal plan $\tau=\tau^{q_I^*,q_G^*}[\tau^{q_G^*,q_G^*}]^\omega$ for $q_0$\;	
	}

	\caption{Security-Aware Optimal LTL Plan\label{al}}
\end{algorithm}

\begin{remark}
Let us discuss the complexity of Algorithm~\ref{al}. 
First, we note that the product system $T_\otimes$ contains at most  $|Q|^2|Q_B|$ states, where $|Q|$ is the number of states in the WTS model and $|Q_B|$ is the number of states in the B\"uchi automaton. 
Algorithm~\ref{al} involves at most $|Q|^4|Q_B|^2$ (very roughly estimated) shortest path problems which can be solved in polynomial-times in the number of states in $T_\otimes$. 
Therefore, the overall planning complexity is polynomial in both the number of states in the plant and the number of states in the B\"uchi automaton. 
Note that, in general, $|Q_B|$ is the length of $\phi$. 
However, in practice, the size of the LTL formula $\phi$ is usually very small and $Q$, which represents the state-space, is usually the main factor for scalability. 
\end{remark}

\subsection{Correctness of the Planning Algorithm}
Now, we prove the correctness of the proposed planning algorithm. 
Hereafter, we assume that the robot is starting from initial state $q_0$ and 
$\tau$ is the optimal plan from $q_0$ returned by  Algorithm~1. 
First, we show that the resulting plan satisfies the LTL task $\phi$.

\begin{mypro}\label{pro1}
	$\texttt{trace}(\tau) \models \phi$.
\end{mypro}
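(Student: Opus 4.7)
The plan is to unwind the construction of $\tau$ backwards through the product and twin-WTS, and show that the corresponding infinite run in the product system projects to an accepting run of $B$ whose input word is exactly $\texttt{trace}(\tau)$.

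First, by the return statement in line 12 of Algorithm~1, $\tau = \tau^{q_I^*,q_G^*}[\tau^{q_G^*,q_G^*}]^\omega$ where $\tau^{q_I^*,q_G^*}=\Pi[\rho_{\text{pre}}]$ and $\tau^{q_G^*,q_G^*}=\Pi[\rho_{\text{suf}}]$ for some finite paths $\rho_{\text{pre}}$ from $q_I^*$ to $q_G^*$ and $\rho_{\text{suf}}$ from $q_G^*$ back to $q_G^*$ in $T_\otimes$. Concatenating them gives an infinite path $\rho = \rho_{\text{pre}}[\rho_{\text{suf}}]^\omega$ in $T_\otimes$ starting from an initial state of $T_\otimes$. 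Writing each state of $\rho$ as $((q_i,q_i'),q_{B,i})$, I would let $\hat{\rho} = q_{B,0}q_{B,1}q_{B,2}\cdots$ be the projection of $\rho$ onto the $B$-component.

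Second, I would show $\hat{\rho}$ is an accepting run of $B$ on $\texttt{trace}(\tau)$. From the definition of the product transition relation, each transition $((x_i,q_{B,i}),(x_{i+1},q_{B,i+1}))\in \to_\otimes$ forces $(q_{B,i}, L_V(x_i), q_{B,i+1})\in \to_B$. Since $L_V((q_i,q_i')) = L(q_i)$ by definition of the twin-WTS, and since $\Pi[\rho] = \tau$, the sequence of labels driving $\hat{\rho}$ is $L(\tau(1))L(\tau(2))\cdots = \texttt{trace}(\tau)$. Also $q_{B,0}\in Q_{0,B}$ because $q_I^*\in Q_{0,\otimes} = X_0\times Q_{0,B}$. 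Hence $\hat{\rho}$ is a legitimate run of $B$ on $\texttt{trace}(\tau)$.

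Third, for acceptance, observe that $q_G^*\in\textsc{Goal}(T_\otimes)$ so its $B$-component $q_{B,G^*}$ lies in $F_B$. Since $\rho_{\text{suf}}$ is a cycle around $q_G^*$ that is traversed infinitely often in $\rho$, the state $q_{B,G^*}$ appears infinitely often in $\hat{\rho}$, giving $\texttt{Inf}(\hat{\rho})\cap F_B \neq \emptyset$. Therefore $\texttt{trace}(\tau)\in\mathcal{L}_B = \texttt{Words}(\phi)$, which is exactly $\texttt{trace}(\tau)\models\phi$.

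The only subtle point, and the one I would check most carefully, is the bookkeeping between the labeling functions: the product is formed from $V$ (not $T$) and $B$, so the labels feeding $B$ are $L_V(x)$ rather than $L(q)$. The definition $L_V((q_1,q_2))=L(q_1)$ together with $\Pi$ selecting the first component of $X$ ensures these agree with $L$ applied to $\tau$, so no mismatch arises. Everything else is a routine unwinding of the product construction and of B\"uchi acceptance.
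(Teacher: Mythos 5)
Your proposal is correct and follows essentially the same route as the paper's own proof: project the infinite prefix--suffix path in $T_\otimes$ onto the B\"uchi component, note that the product's transition rule together with $L_V((q,q'))=L(q)$ makes this a run of $B$ over $\texttt{trace}(\tau)$, and use the accepting state at $q_G^*$ recurring in the suffix cycle to conclude $\texttt{Inf}\cap F_B\neq\emptyset$. Your explicit check that the labels fed to $B$ through $L_V$ agree with $L$ on the first component is a point the paper glosses over, but it is the same argument.
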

\begin{proof}
	We assume that optimal  path is obtained from the following projection
	$\tau= \Pi [ p^{pre} (p^{suf})^\omega ]$, where
	\begin{align}
		&p^{pre}=((q_0, q'_0), q_{0,B})\cdots ((q_n, q'_n), q_{n,B})\nonumber\\
		&p^{suf}=((q_{n+1}, q'_{n+1}), q_{n+1,B})\cdots ((q_{n+m}, q'_{n+m}), q_{n+m,B})\nonumber
	\end{align} 
and $q_{n+1,B}\in F_B$. 
That is, $\tau =q_0\cdots q_n(q_{n+1}\cdots q_{n+m})^\omega$.
According to the transition rule of $T_{\otimes}$, 
$\rho=q_{0,B}\cdots q_{n,B}(q_{n+1,B}\cdots q_{n+m,B})^\omega$ is an infinite run induced by 
infinite word $\texttt{trace}(\tau)=   L(q_0)\cdots L(q_n) (L(q_{n+1})\cdots L(q_{n+m}))^\omega$. 
Since $q_{n+1,B}\!\in\! F_B$, we know that $\texttt{Inf}(\rho)\cap F_B\neq \emptyset$, which means
$\texttt{trace}(\tau) \in \mathcal{L}_B=\texttt{Word}(\phi)$, i.e., $\texttt{trace}(\tau) \models \phi$.
\end{proof}

Second, we show that the planned path is secure.

\begin{mypro}\label{pro2}
	$\tau$ is secure.
\end{mypro}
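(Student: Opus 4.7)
The plan is to exhibit an explicit witness path $\tau'$ satisfying the two clauses of Definition~\ref{def:security}, namely $\tau'(1)\notin Q_S$ and $H(\tau)=H(\tau')$. The natural candidate is obtained by reading off the \emph{second} component of the product-system path that Algorithm~\ref{al} produces, rather than the first component which yields $\tau$ itself. Concretely, I would start from the same decomposition used in the proof of Proposition~\ref{pro1}, writing the underlying product path as $p^{pre}(p^{suf})^\omega$ with states of the form $((q_i,q_i'),q_{i,B})$, and introduce a second projection $\Pi'$ that returns $q_i'$ rather than $q_i$. Setting $\tau'=\Pi'[p^{pre}(p^{suf})^\omega]$ then produces the candidate witness.

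Next I would verify, step by step, that $\tau'$ enjoys the three required properties. First, $\tau'$ is an infinite path of $T$: by construction of $T_\otimes$, any transition in the product comes from a transition $(x,x')\in\,\to_V$ of the twin-WTS, and by the definition of $\to_V$ the second components satisfy $(q_i',q_{i+1}')\in\,\to$; moreover the initial state lies in $X_0$, so $q_0'\in Q_0$, and hence $\tau'\in\texttt{Path}^\omega(T)$. Second, $\tau'(1)=q_0'\notin Q_S$, because the initial product state belongs to $\textsc{Int}_{q_0}(T_\otimes)$, which by definition restricts the second component to the non-secret initial states. Third, the output equivalence $H(\tau)=H(\tau')$ is immediate from the twin-WTS construction: the initial condition forces $H(q_0)=H(q_0')$, and the transition rule of $V$ enforces $H(q_i')=H(q_i)$ at every subsequent step, so the two infinite external paths coincide symbol by symbol.

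Combining these three facts gives exactly the conclusion of Definition~\ref{def:security}, namely that $\tau$ is secure. I do not expect any serious obstacle in this proof; the work was essentially done when the twin-WTS and the set $\textsc{Int}_{q_0}(T_\otimes)$ were designed, and Proposition~\ref{pro2} is the verification that these design choices indeed encode the security constraint. The only point that requires mild care is bookkeeping: making clear that the infinite repetition of $p^{suf}$ after projection genuinely yields an infinite path in $T$ (not merely a state sequence), which follows because each $p^{suf}$ begins and ends at compatible product states and therefore its second-component projection also closes up into a valid cycle of $T$.
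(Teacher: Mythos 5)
Your proposal is correct and follows essentially the same route as the paper's own proof: both read off the second component of the product path as the witness $\tau'$, use the definition of $\textsc{Int}_{q_0}(T_\otimes)$ to get $\tau'(1)\notin Q_S$, and use the twin-WTS transition rule to get $H(\tau)=H(\tau')$. Your version is, if anything, slightly more explicit than the paper's in verifying that $\tau'$ is itself a valid infinite path of $T$.
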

\begin{proof}
Without loss of generality, we assume that $q_0\in Q_S$; otherwise, $\tau$ is secure trivially. 
Still, we assume that the optimal  path is obtained  by
$\tau= \Pi [  p^{pre} (p^{suf})^\omega ]$, where
\begin{align}
&p^{pre}=((q_0, q'_0), q_{0,B})\cdots ((q_n, q'_n), q_{n,B})\nonumber\\
&p^{suf}=((q_{n+1}, q'_{n+1}), q_{n+1,B})\cdots ((q_{n+m}, q'_{n+m}), q_{n+m,B}).\nonumber
\end{align} 
Then we know that 
$(q_0, q'_0)\cdots (q_n, q'_n) ( (q_{n+1}, q'_{n+1}) \cdots$ $ (q_{n+m}, q'_{n+m}) )^\omega \in \texttt{Path}^\omega(V)$.  
According to the transition rule of $V$, we have 
$H(\tau)= H(  q'_0\cdots q'_n (  q'_{n+1} \cdots q'_{n+m}  )^\omega  )$. 
Finally, since 
$((q_0, q'_0), q_{0,B}) \in \textsc{Int}_{q_0}(T_\otimes)$, we know that $q_0'\notin Q_S$. 
Therefore, $\tau'=q'_0\cdots q'_n (  q'_{n+1} \cdots q'_{n+m}  )^\omega $ is an internal path from a non-secret initial state having the same observation with $\tau$, i.e.,  $\tau$ is secure.
\end{proof}

Finally, we show that the planned path is optimal.

\begin{mypro}\label{pro3}
For any other secure path $\tilde{\tau}=\tilde{\tau}^{pre}[\tilde{\tau}^{suf}]^\omega$ such that $\texttt{trace}(\tilde{\tau}) \models \phi$, we have 
	$\hat{J}(\tau)\leq \hat{J}(\tilde{\tau})$.
\end{mypro}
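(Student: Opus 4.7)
The plan is to show that every secure path $\tilde{\tau}$ satisfying $\phi$ lifts to a path in the product system $T_\otimes$ with a prefix-suffix structure of the form $\textsc{Int}_{q_0}(T_\otimes) \to (\textsc{Goal}(T_\otimes) \to \textsc{Goal}(T_\otimes))^\omega$, and that its cost is at least the cost of the lifted path. Then, since Algorithm~\ref{al} enumerates all pairs $(q_I, q_G) \in \textsc{Int}_{q_0}(T_\otimes) \times \textsc{Goal}(T_\otimes)$ and picks the shortest prefix and shortest cycle for each pair before minimizing, the returned plan will dominate this lifted cost.

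Concretely, I would first fix an arbitrary secure $\tilde{\tau}\in \texttt{Path}^\omega(T)$ with $\tilde{\tau}(1)=q_0$ and $\texttt{trace}(\tilde{\tau}) \models \phi$. By Definition~\ref{def:security}, there exists $\tilde{\tau}' \in \texttt{Path}^\omega(T)$ with $\tilde{\tau}'(1) \notin Q_S$ and $H(\tilde{\tau}) = H(\tilde{\tau}')$. Using the construction of the twin-WTS, the componentwise pairing $(\tilde{\tau}(i),\tilde{\tau}'(i))_{i\ge 1}$ yields a valid infinite path in $V$ starting from $(q_0, \tilde{\tau}'(1)) \in X_0$. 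Moreover, since $\texttt{trace}(\tilde{\tau}) \models \phi$, there is an accepting run $\rho = q_{0,B}q_{1,B}\cdots$ of $B$ over $L(\tilde{\tau}(1))L(\tilde{\tau}(2))\cdots$ visiting $F_B$ infinitely often. Synchronizing these gives an infinite path $\tilde{p} \in \texttt{Path}^\omega(T_\otimes)$ whose initial state lies in $\textsc{Int}_{q_0}(T_\otimes)$ (because $\tilde{\tau}'(1) \notin Q_S$) and which visits the set $\{((q,q'),q_B) : q_B \in F_B\}$ infinitely often.

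Next, I would argue that $\tilde{p}$ admits a prefix-suffix decomposition $\tilde{p}^{pre}(\tilde{p}^{suf})^\omega$ whose projection under $\Pi$ equals $\tilde{\tau}^{pre}(\tilde{\tau}^{suf})^\omega$ (up to the convention about minimum-cost decompositions discussed in Remark~2). In particular, there exist states $\tilde{q}_I \in \textsc{Int}_{q_0}(T_\otimes)$ and $\tilde{q}_G \in \textsc{Goal}(T_\otimes)$ such that $\tilde{p}^{pre}$ is some path in $T_\otimes$ from $\tilde{q}_I$ to $\tilde{q}_G$ and $\tilde{p}^{suf}$ is some cycle at $\tilde{q}_G$. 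Because $w_\otimes$ copies $w$ on the first component, the costs satisfy $J(\tilde{p}^{pre}) = J(\tilde{\tau}^{pre})$ and $J(\tilde{p}^{suf}\cdot \tilde{q}_G) = J(\tilde{\tau}^{suf}\cdot \tilde{\tau}(n+1))$, so $\hat J(\tilde{\tau})$ equals the cost of the lifted prefix-suffix in $T_\otimes$.

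Finally, the shortest paths computed in lines~9-10 ensure that for the pair $(\tilde{q}_I, \tilde{q}_G)$, Algorithm~\ref{al} produces a candidate plan $\tau^{\tilde{q}_I,\tilde{q}_G}[\tau^{\tilde{q}_G,\tilde{q}_G}]^\omega$ whose cost is no larger than the cost of $\tilde{p}^{pre}(\tilde{p}^{suf})^\omega$, i.e., no larger than $\hat J(\tilde{\tau})$. Since line~11 chooses $(q_I^*, q_G^*)$ to minimize over all such pairs, we get $\hat J(\tau) \leq \hat J(\tilde{\tau})$, as required. The main obstacle will be to carefully justify the lifting step — in particular, to verify that an arbitrary choice of prefix-suffix decomposition for $\tilde{\tau}$ can always be matched by one in $T_\otimes$ whose cut-point lies in $\textsc{Goal}(T_\otimes)$; this requires using the fact that the accepting run of $B$ visits $F_B$ infinitely often inside the suffix cycle, so one can shift the decomposition so that both the Büchi-accepting state and a repeated twin-WTS state coincide, placing the cut in $\textsc{Goal}(T_\otimes)$ without increasing the cost.
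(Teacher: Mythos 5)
Your proof follows essentially the same route as the paper's: lift $\tilde{\tau}$ together with its security witness $\tilde{\tau}'$ and an accepting B\"{u}chi run to a path in $T_\otimes$, observe that its initial state lies in $\textsc{Int}_{q_0}(T_\otimes)$ and that it reaches a state of $\textsc{Goal}(T_\otimes)$, and then invoke the shortest-path computations and the minimization in line~11 of Algorithm~\ref{al}; the paper merely packages the identical argument as a proof by contradiction. The one step you single out as ``the main obstacle'' --- aligning the prefix-suffix cut of $\tilde{\tau}$ with a cut of the lifted path at a $\textsc{Goal}(T_\otimes)$ state without increasing the suffix-cycle cost (nontrivial, since the witness $\tilde{\tau}'$ and the B\"{u}chi run need not share the period of $\tilde{\tau}^{suf}$) --- is precisely the step the paper dispatches with ``without loss of generality,'' so your treatment is, if anything, more explicit than the published one.
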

\begin{proof}
We prove by contradiction.
Suppose that there exists a secure path $\tilde{\tau}\in \texttt{Path}^\omega(T)$ such that $\texttt{trace}(\tilde{\tau}) \models \phi$ and $\hat{J}(\tilde{\tau}) < \hat{J}(\tau)$. 
Since $\tilde{\tau}$ is secure, we know that there exists another path  $\tilde{\tau}'\in \texttt{Path}^\omega(T)$ such that $H(\tilde{\tau})=H(\tilde{\tau}')$ and $\tilde{\tau}'(1)\notin Q_S$. 
According to the definition of $V$, we know that there exists a path 
$\tau_V \in \texttt{Path}^\omega(V)$ in which the first component is  $\tilde{\tau}$ and the second component is $\tilde{\tau}'$. 
Furthermore, since  $\texttt{trace}(\tilde{\tau}) \models \phi$, by the definition of $T_\otimes$, 
there exists a path 
$\tau_\otimes \in \texttt{Path}^\omega(T_\otimes)$ in which the first component is  $\tau_V$ and the second component is $\tau_B$ such that $\texttt{Inf}(\tau_B)\cap F_B\neq \emptyset$.  Without loss of generality, we write 
$\tau_\otimes=p^{pre}(p^{suf})^{\omega}$ in the prefix-suffix structure, where 
\begin{align}
	&p^{pre}=((q_0, q'_0), q_{0,B})\cdots ((q_n, q'_n), q_{n,B})\nonumber\\
	&p^{suf}=((q_{n+1}, q'_{n+1}), q_{n+1,B})\cdots ((q_{n+m}, q'_{n+m}), q_{n+m,B}) \nonumber
\end{align}
and $q_{n+1,B}\in F_B$. 
Since $q'_0=\tilde{\tau}'(1)\notin Q_S$, we know that $\tilde{q}_I:=((q_0, q'_0), q_{0,B})\in \textsc{Int}_{q_0}(T_\otimes)$. Furthermore, we have 
$\tilde{q}_G:=((q_{n+1}, q'_{n+1}), q_{n+1,B}) \in  \texttt{Reach}(\{\tilde{q}_I\}) \!\cap \! \textsc{Goal}(T_\otimes)$.  
However, $\hat{J} (\tau^{\tilde{q}_,\tilde{q}_G}[\tau^{\tilde{q}_G,\tilde{q}_G}]^\omega )=  \hat{J}(\tilde{\tau}) < \hat{J}(\tau)$. 
This means that Algorithm~1 should at least output $\tau^{\tilde{q}_,\tilde{q}_G}[\tau^{\tilde{q}_G,\tilde{q}_G}]^\omega$ rather than $\tau$, which is a contradiction.
\end{proof}

The above three propositions show that the proposed algorithm is sound in the sense that the solution is correct if it finds one. Note that Algorithm~1 may return ``no feasible plan from $q_0$". Next we show that the proposed algorithm is also complete.

\begin{mypro}\label{pro4}
If  Algorithm~\ref{al} returns ``no feasible plan from $q_0$", then no solution to Problem 1 exists.
\end{mypro}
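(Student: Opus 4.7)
The plan is to prove the contrapositive: assuming a feasible solution to Problem~1 from $q_0$ exists, I will show that $\texttt{Reach}(\textsc{Int}_{q_0}(T_\otimes)) \cap \textsc{Goal}(T_\otimes) \neq \emptyset$, which is exactly the condition that prevents Algorithm~1 from declaring infeasibility on line~4. The construction reverses the projections used in Propositions~\ref{pro1}--\ref{pro3}: take any feasible plan from $q_0$ and lift it, together with its secrecy-witness path and an accepting run of $B$, to a path in $T_\otimes$ whose initial state lies in $\textsc{Int}_{q_0}(T_\otimes)$ and which revisits $\textsc{Goal}(T_\otimes)$.

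Concretely, I first suppose $\tilde\tau \in \texttt{Path}^\omega(T)$ is any feasible plan, so $\tilde\tau(1)=q_0$, $\texttt{trace}(\tilde\tau)\models\phi$, and $\tilde\tau$ is secure. By Definition~\ref{def:security}, there exists $\tilde\tau'\in\texttt{Path}^\omega(T)$ with $\tilde\tau'(1)\notin Q_S$ and $H(\tilde\tau)=H(\tilde\tau')$. The observation-equivalence together with the transition rule of the twin-WTS lets me form the pairwise sequence $\tau_V(i)=(\tilde\tau(i),\tilde\tau'(i))$, which is an infinite path of $V$ starting from an initial state of $V$ (since both $\tilde\tau(1),\tilde\tau'(1)\in Q_0$ and their outputs coincide). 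Next, since $\texttt{trace}(\tilde\tau)\in\texttt{Words}(\phi)=\mathcal{L}_B$ and $L_V$ is defined through the first component, the word $L_V(\tau_V(1))L_V(\tau_V(2))\cdots$ admits an accepting run $q_{0,B}q_{1,B}\cdots$ in $B$. Synchronizing this run with $\tau_V$ yields an infinite path $\tau_\otimes$ in $T_\otimes$.

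It remains to identify the two distinguished states. The initial state of $\tau_\otimes$ is $((q_0,\tilde\tau'(1)),q_{0,B})$, which lies in $\textsc{Int}_{q_0}(T_\otimes)$ because $\tilde\tau'(1)\notin Q_S$ and $q_{0,B}\in Q_{0,B}$. Because the run is accepting, some state $q_B^\star\in F_B$ appears infinitely often in the $B$-component; since $Q_\otimes$ is finite, some full $T_\otimes$-state $q_\otimes^\star$ with $B$-component equal to $q_B^\star$ also appears infinitely often, hence $q_\otimes^\star\in\texttt{cycle}(T_\otimes)$ and therefore $q_\otimes^\star\in\textsc{Goal}(T_\otimes)$. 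Moreover $q_\otimes^\star$ is reachable from the initial segment of $\tau_\otimes$, i.e.\ from $\textsc{Int}_{q_0}(T_\otimes)$. This yields $\texttt{Reach}(\textsc{Int}_{q_0}(T_\otimes)) \cap \textsc{Goal}(T_\otimes)\neq\emptyset$, so Algorithm~\ref{al} would not have executed line~5, completing the contrapositive.

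The only delicate point I foresee is the passage ``$q_B^\star$ visited infinitely often implies some full $T_\otimes$-state with that $B$-component is visited infinitely often.'' This is a straightforward pigeonhole argument on the finite set $Q_\otimes$, but it is what closes the gap between an accepting run of $B$ and membership in $\texttt{cycle}(T_\otimes)$, and so deserves to be stated explicitly rather than hand-waved.
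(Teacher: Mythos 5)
Your proof is correct and follows essentially the same route as the paper's: argue the contrapositive by lifting a feasible secure plan together with its witness path and an accepting run of $B$ to a path in $T_\otimes$, whose initial state lies in $\textsc{Int}_{q_0}(T_\otimes)$ and which reaches $\textsc{Goal}(T_\otimes)$. The only difference is that you make explicit the pigeonhole step showing some full product state with accepting $B$-component recurs (hence lies in $\texttt{cycle}(T_\otimes)$), which the paper leaves implicit.
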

\begin{proof}
The proof is similar to the proof of Proposition~\ref{pro3}. 
Suppose, for the sake of contraposition, that there exists a secure path ${\tau}\in \texttt{Path}^\omega(T)$ such that $\texttt{trace}({\tau}) \models \phi$. 
Following the same argument in the proof of Proposition~\ref{pro3}, there exists a path 
$\tau_\otimes \in \texttt{Path}^\omega(T_\otimes)$, which is in the form of $\tau_\otimes=((\tau,\tau'),\tau_B)$ 
such that $\tau'(1)\notin Q_S$ and  $\texttt{Inf}(\tau_B)\cap F_B\neq \emptyset$. 
Therefore,  we have $\tilde{q}_I:=  ((\tau(1),\tau'(1)),\tau_B(1))   \in \textsc{Int}_{q_0}(T_\otimes)$. 
Furthermore,   $ \texttt{Reach}(\{\tilde{q}_I\}) \!\cap \! \textsc{Goal}(T_\otimes)\neq\emptyset$ since  $\texttt{Inf}(\tau_B)\cap F_B\neq \emptyset$. 
Therefore,  Algorithm~1 will not return ``no feasible plan from $q_0$".
\end{proof}

Finally, we summarize Propositions~\ref{pro1}, \ref{pro2}, \ref{pro3} and~\ref{pro4} by the following theorem.
\begin{mythm}\label{thm:1}
For any WTS $T=(Q,Q_0, \to, w,\mathcal{AP},L)$ with output function $H:Q\to Y$, secret states $Q_S$ and   LTL formula $\phi$, Algorithm~\ref{al} correctly solves the optimal security-aware LTL planning problem defined in Problem~\ref{prob1}.
\end{mythm}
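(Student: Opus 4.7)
The plan is to establish the theorem by combining the four propositions that have just been proved, since each addresses exactly one facet of what ``correctly solves Problem~\ref{prob1}'' means. Problem~\ref{prob1} imposes three requirements on the output plan (LTL satisfaction, security, and optimality) and implicitly a fourth (failure only when no feasible solution exists). So my strategy is to argue soundness and completeness separately, invoking the propositions as black boxes.

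First I would dispatch the soundness direction. Assuming Algorithm~\ref{al} returns a plan $\tau$ for initial state $q_0$, Proposition~\ref{pro1} gives condition~(1), namely $\texttt{trace}(\tau)\models\phi$; Proposition~\ref{pro2} gives condition~(2), namely that $\tau$ is secure (including the trivial case $q_0\notin Q_S$ where one can take $\tau'=\tau$); and Proposition~\ref{pro3} gives condition~(3), namely that $\hat{J}(\tau)\le \hat{J}(\tilde\tau)$ for any competitor $\tilde\tau$ satisfying both (1) and (2). Together these three show that whenever the algorithm returns a plan, it is in fact optimal among all feasible plans from $q_0$.

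Next I would handle the completeness direction. The only other branch of Algorithm~\ref{al} is the return of ``no feasible plan from $q_0$''. Proposition~\ref{pro4} directly states that this branch is taken only when no path simultaneously satisfying (1) and (2) exists in $T$ starting from $q_0$; contrapositively, if any feasible plan exists, the algorithm proceeds into the \texttt{for} loops and eventually produces one, which by the soundness argument above must be optimal. Combining these two directions yields the claim of Theorem~\ref{thm:1}.

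I do not anticipate a real technical obstacle, since all the heavy lifting has already been done in Propositions~\ref{pro1}--\ref{pro4}. The only minor subtlety worth flagging is bookkeeping: the propositions were stated under the standing assumption ``let $\tau$ be the optimal plan from $q_0$ returned by Algorithm~1'', so the theorem proof should make it explicit that the argument is being quantified uniformly over all initial states $q_0\in Q_0$ for which a feasible plan exists, and that the non-secret case $q_0\notin Q_S$ reduces to standard optimal LTL planning (as already noted in the remark after Problem~\ref{prob1}). Once that quantification is in place, the theorem follows immediately by citing the four propositions in sequence.
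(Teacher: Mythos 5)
Your proposal is correct and matches the paper's own proof, which likewise derives soundness from Propositions~\ref{pro1}--\ref{pro3} and completeness from Proposition~\ref{pro4}. The extra bookkeeping you flag about quantifying over initial states is a reasonable clarification but does not change the argument.
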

\begin{proof}
The soundness of the algorithm is established by Propositions~\ref{pro1},~\ref{pro2} and~\ref{pro3}, 
and Proposition~\ref{pro4} shows the completeness of the algorithm.
\end{proof}

\section{Case Study}\label{sec:6}
\begin{figure}
	\center
	\includegraphics[width=0.34\textwidth]{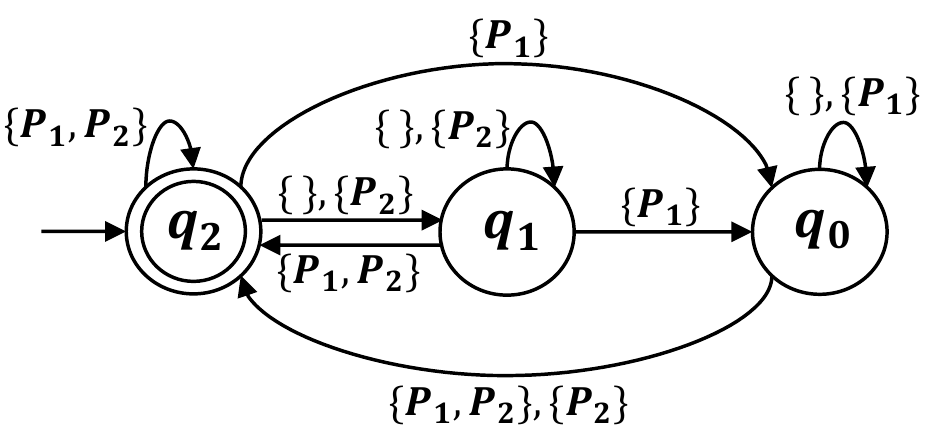}
	\caption{An NBA translated from $\phi=\square\lozenge P_1\wedge\square\lozenge P_2$.}\label{fig:nba}
\end{figure}

\begin{figure}
	\center
	\includegraphics[width=0.42\textwidth]{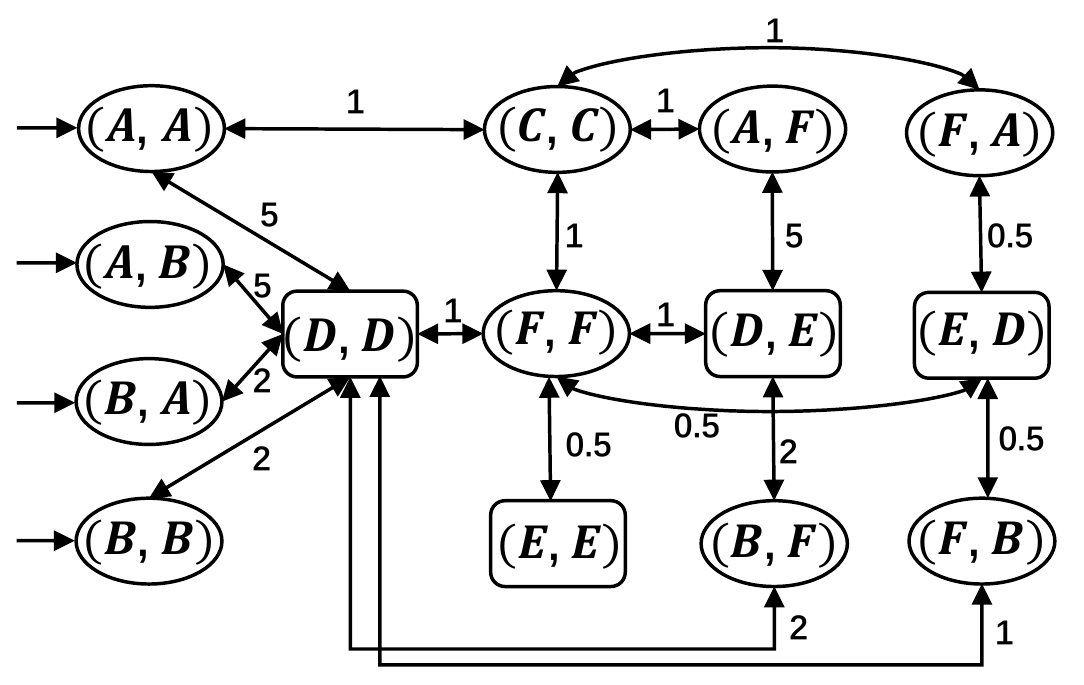}
	\caption{Twin-WTS $V$ of $T$ in Figure~\ref{fig:case}.}\label{fig:twin}
\end{figure}

\begin{figure}
	\center
	\includegraphics[width=0.46\textwidth]{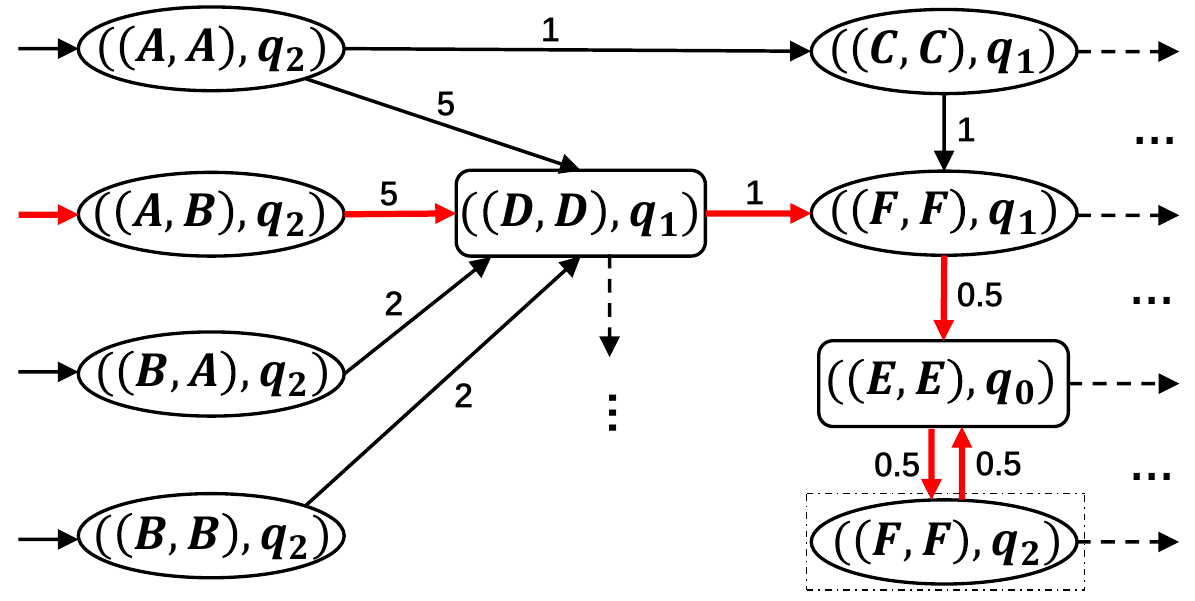}
	\caption{Example of the construction of the $T_{\otimes}$.  Red transitions represent the optimal feasible path. Due to limited space, some states and transitions are omitted and part of the product system is shown.}\label{fig:product}
\end{figure}
We go back to the motivating example in Section~\ref{sec:2} to illustrate the proposed planning algorithm. 
Consider again the WTS in Figure~\ref{fig:case}. 
To formalize the LTL task, we consider two atomic propositions $\mathcal{AP}=\{P_1, P_2\}$ with labeling function $L:Q\to 2^{\mathcal{AP}}$
defined by $L(F)=\{P_1\}$, $L(E)=\{P_2\}$ and $L(q)=\emptyset$ for other states. Then the task of the robot is expressed by the LTL formula
\[
\phi=\square\lozenge P_1\wedge \square\lozenge P_2.
\] 
The observation mapping is $H:Q\to \{Grass, Sand\}$ as specified in Figure~\ref{fig:case}. 
We define $Q_S=\{A\}\subseteq Q$, i.e., state $A$ is the unique  secret initial state.

To achieve the planning task, first we  convert $\phi$ to NBA $B=(Q_B,Q_{0,B},\Sigma,\to_B,F_B)$, which is shown in Figure~\ref{fig:nba}; such a conversion can be done by, e.g., the tool developed in \cite{gastin2001fast}. 
Then we construct the corresponding twin-WTS $V$, which is shown in Figure~\ref{fig:twin}. 
Specifically, $V$ contains four initial states $(A,A),(B,B), (A,B)$ and $(B,A)$ since $H(A)=H(B)=Sand$ and four combination are all  valid initial states.  
Then, for example, starting from $(A,B)$, only state $(D,D)$ can be reached as $A\to D, B\to D$ and $H(D)=H(D)=Grass$.  
Also, from state $(C,C)$, we can reach $(A,F)$ as  $C\to A, C\to F$ and $H(A)=H(F)=Sand$.
Finally, we need to construct the product system $T_{\otimes}$; for the sake of simplicity, we just show part of $T_{\otimes}$  in Figure~\ref{fig:product}, which is sufficient for the purpose of planning. 

Now, we assume that the robot is starting from  secret initial state $A$. 
Then we have  $\textsc{Int}_{A}(T_\otimes)=\{((A,B),q_2)\}$, which is a singleton.   
Also, we have $((F,F),q_2)\in \texttt{Reach}( \{((A,B),q_2)\}  ) \cap \textsc{Goal}(T_\otimes)$. 
One can check that such a state pair is indeed the one that minimizes the cost function if we draw the complete product system. 
Therefore, we obtain an optimal plan 
$\tau= \Pi[((A,B),q_2) ((D,D),q_1) ((F,F),q_1)  ( ((E,E),q_0)$ $ ((F,F),q_2)    )^\omega  ]     = AD(FE)^\omega$, 
which is highlighted by red transitions in Figure~\ref{fig:product}.

\section{Conclusion}\label{sec:7}
In this paper, we solved a  security-aware  optimal path planning problem for linear temporal logic tasks. 
A polynomial-time algorithm was proposed based on the product of the twin-system and the B\"uchi automaton.   
The synthesized solution is \emph{secure-by-construction} in the sense that it provides provably security guarantees for the designed systems against temporal logic tasks. 
Note that, in this work, we consider security requirement for  protecting the initial secret location of the system. 
In the future, we would like to extend the proposed algorithm to other types of security, e.g., infinite-step opacity. Also, we are interested in investigating optimal LTL path planning for multi-robot systems with security guarantees.
\bibliographystyle{ieeetr}
\bibliography{LTL}

\begin{thebibliography}{10}

\bibitem{lavalle2006planning}
S.~LaValle, {\em Planning algorithms}.
\newblock Cambridge university press, 2006.

\bibitem{karaman2011sampling}
S.~Karaman and E.~Frazzoli, ``Sampling-based algorithms for optimal motion
  planning,'' {\em The international Journal of Robotics Research}, vol.~30,
  no.~7, pp.~846--894, 2011.

\bibitem{fainekos2009temporal}
G.~Fainekos, A.~Girard, H.~Kress-Gazit, and G.~Pappas, ``Temporal logic motion
  planning for dynamic robots,'' {\em Automatica}, vol.~45, no.~2,
  pp.~343--352, 2009.

\bibitem{wongpiromsarn2012receding}
T.~Wongpiromsarn, U.~Topcu, and R.~Murray, ``Receding horizon temporal logic
  planning,'' {\em IEEE Transactions on Automatic Control}, vol.~57, no.~11,
  pp.~2817--2830, 2012.

\bibitem{kress2018synthesis}
H.~Kress-Gazit, M.~Lahijanian, and V.~Raman, ``Synthesis for robots: Guarantees
  and feedback for robot behavior,'' {\em Annual Review of Control, Robotics,
  and Autonomous Systems}, vol.~1, pp.~211--236, 2018.

\bibitem{kloetzer2019path}
M.~Kloetzer and C.~Mahulea, ``Path planning for robotic teams based on ltl
  specifications and petri net models,'' {\em Discrete Event Dynamic Systems},
  vol.~30, no.~1, pp.~55--79, 2020.

\bibitem{baier2008principles}
C.~Baier and J.~Katoen, {\em Principles of model checking}.
\newblock MIT press, 2008.

\bibitem{li2019coordinated}
L.~Li, A.~Bayuelo, L.~Bobadilla, T.~Alam, and D.~Shell, ``Coordinated
  multi-robot planning while preserving individual privacy,'' in {\em
  International Conference on Robotics and Automation (ICRA)}, pp.~2188--2194,
  2019.

\bibitem{zheng2020adversarial}
H.~Zheng, J.~Panerati, G.~Beltrame, and A.~Prorok, ``An adversarial approach to
  private flocking in mobile robot teams,'' {\em IEEE Robotics and Automation
  Letters}, vol.~5, no.~2, pp.~1009--1016, 2020.

\bibitem{saboori2013verification}
A.~Saboori and C.~Hadjicostis, ``Verification of initial-state opacity in
  security applications of discrete event systems,'' {\em Information
  Sciences}, vol.~246, pp.~115--132, 2013.

\bibitem{smith2011optimal}
S.~Smith, J.~Tumova, C.~Belta, and D.~Rus, ``Optimal path planning for
  surveillance with temporal-logic constraints,'' {\em The International
  Journal of Robotics Research}, vol.~30, no.~14, pp.~1695--1708, 2011.

\bibitem{guo2015multi}
M.~Guo and D.~Dimarogonas, ``Multi-agent plan reconfiguration under local ltl
  specifications,'' {\em The International Journal of Robotics Research},
  vol.~34, no.~2, pp.~218--235, 2015.

\bibitem{ulusoy2013optimality}
A.~Ulusoy, S.~Smith, X.~Ding, C.~Belta, and D.~Rus, ``Optimality and robustness
  in multi-robot path planning with temporal logic constraints,'' {\em The
  International Journal of Robotics Research}, vol.~32, no.~8, pp.~889--911,
  2013.

\bibitem{li2017sampling}
L.~Li and J.~Fu, ``Sampling-based approximate optimal temporal logic
  planning,'' in {\em IEEE International Conference on Robotics and Automation
  (ICRA)}, pp.~1328--1335, 2017.

\bibitem{kantaros2018sampling}
Y.~Kantaros and M.~Zavlanos, ``Sampling-based optimal control synthesis for
  multirobot systems under global temporal tasks,'' {\em IEEE Transactions on
  Automatic Control}, vol.~64, no.~5, pp.~1916--1931, 2018.

\bibitem{wolff2012robust}
E.~Wolff, U.~Topcu, and R.~Murray, ``Robust control of uncertain markov
  decision processes with temporal logic specifications,'' in {\em 51st IEEE
  Conference on Decision and Control (CDC)}, pp.~3372--3379, 2012.

\bibitem{ding2014optimal}
X.~Ding, S.~Smith, C.~Belta, and D.~Rus, ``Optimal control of {M}arkov decision
  processes with linear temporal logic constraints,'' {\em IEEE Transactions on
  Automatic Control}, vol.~59, no.~5, pp.~1244--1257, 2014.

\bibitem{deng2017approximate}
K.~Deng, Y.~Chen, and C.~Belta, ``An approximate dynamic programming approach
  to multiagent persistent monitoring in stochastic environments with temporal
  logic constraints,'' {\em IEEE Transactions on Automatic Control}, vol.~62,
  no.~9, pp.~4549--4563, 2017.

\bibitem{guo2018probabilistic}
M.~Guo and M.~Zavlanos, ``Probabilistic motion planning under temporal tasks
  and soft constraints,'' {\em IEEE Transactions on Automatic Control},
  vol.~63, no.~12, pp.~4051--4066, 2018.

\bibitem{hadjicostis2018trajectory}
C.~Hadjicostis, ``Trajectory planning under current-state opacity
  constraints,'' in {\em 14th IFAC Workshop on Discrete Event Systems (WODES)},
  pp.~337--342, 2018.

\bibitem{clarkson2010hyperproperties}
M.~Clarkson and F.~Schneider, ``Hyperproperties,'' {\em Journal of Computer
  Security}, vol.~18, no.~6, pp.~1157--1210, 2010.

\bibitem{clarkson2014temporal}
M.~Clarkson, B.~Finkbeiner, M.~Koleini, K.~Micinski, M.~Rabe, and
  C.~S{\'a}nchez, ``Temporal logics for hyperproperties,'' in {\em
  International Conference on Principles of Security and Trust}, pp.~265--284,
  2014.

\bibitem{wang2020hyperproperties}
Y.~Wang, S.~Nalluri, and M.~Pajic, ``Hyperproperties for robotics: Motion
  planning via {HyperLTL},'' in {\em IEEE International Conference on Robotics
  and Automation (ICRA)}, 2020.
\newblock arXiv:1911.11870.

\bibitem{dubreil2010supervisory}
J.~Dubreil, P.~Darondeau, and H.~Marchand, ``Supervisory control for opacity,''
  {\em IEEE Trans.\ Automatic Control}, vol.~55, no.~5, pp.~1089--1100, 2010.

\bibitem{saboori2011opacity}
A.~Saboori and C.~Hadjicostis, ``Opacity-enforcing supervisory strategies via
  state estimator constructions,'' {\em IEEE Trans.\ Automatic Control},
  vol.~57, no.~5, pp.~1155--1165, 2011.

\bibitem{yin2016uniform}
X.~Yin and S.~Lafortune, ``A uniform approach for synthesizing
  property-enforcing supervisors for partially-observed discrete-event
  systems,'' {\em IEEE Trans.\ Automatic Control}, vol.~61, no.~8,
  pp.~2140--2154, 2016.

\bibitem{tong2018current}
Y.~Tong, Z.~Li, C.~Seatzu, and A.~Giua, ``Current-state opacity enforcement in
  discrete event systems under incomparable observations,'' {\em Discrete Event
  Dynamic Systems: Theory \& Appllications}, vol.~28, no.~2, pp.~161--182,
  2018.

\bibitem{kloetzer2007temporal}
M.~Kloetzer and C.~Belta, ``Temporal logic planning and control of robotic
  swarms by hierarchical abstractions,'' {\em IEEE Transactions on Robotics},
  vol.~23, no.~2, pp.~320--330, 2007.

\bibitem{belta2017formal}
C.~Belta, B.~Yordanov, and E.~Gol, {\em Formal methods for discrete-time
  dynamical systems}, vol.~89.
\newblock Springer, 2017.

\bibitem{vardi1986automata}
M.~Vardi and P.~Wolper, ``An automata-theoretic approach to automatic program
  verification,'' in {\em Proceedings of the First Symposium on Logic in
  Computer Science}, pp.~322--331, 1986.

\bibitem{gastin2001fast}
P.~Gastin and D.~Oddoux, ``Fast {LTL} to {B}\"{u}chi automata translation,'' in
  {\em International Conference on Computer Aided Verification (CAV)},
  pp.~53--65, Springer, 2001.

\end{thebibliography}

\end{document}